\documentclass[12pt]{article}
\usepackage{fullpage}
\usepackage{amsmath,amssymb,amsthm}
\usepackage{geometry}
\usepackage{xcolor}
\usepackage[normalem]{ulem}

\newtheorem{theorem}{Theorem}[section]
\newtheorem{proposition}[theorem]{Proposition}

\newtheorem{corollary}[theorem]{Corollary}

\newtheorem{remark}[theorem]{Remark}

\title{The Hodograph Transform Between Thermodynamics and Relativity}
\author{Leonid Polterovich\thanks{Tel Aviv University, Israel}}
\date{\today}

\begin{document}

\maketitle

\begin{abstract}
In the contact-geometric approach to general relativity, the sky of an event—namely, the set of all incoming light rays—forms a Legendrian submanifold of the spherical cotangent bundle of a Cauchy hypersurface. When the hypersurface is chosen to be the Minkowski hyperboloid, a hyperbolic version of the hodograph transform identifies this bundle with a thermodynamic phase space. We consider a uniformly accelerating observer starting on the hyperboloid and study the evolution of its skies. We show that the associated generating functions, after a suitable rescaling, admit a natural interpretation as reduced free energies of equilibrium thermodynamic systems governed by the relativistic Doppler effect. From this data, we extract an effective temperature that is proportional to the acceleration, in agreement with the scaling of the Unruh effect, although the numerical constant differs from the Unruh value.
\end{abstract}

\section{Introduction}

It was recently understood that the causal structure on the space of Legendrian submanifolds—a deep phenomenon in contact geometry—appears both in general relativity  \cite{CN} and in thermodynamics \cite{EPR}. In this note we suggest a toy model which might be viewed as an argument in favor of a duality between these two subjects.

We view the unit future hyperboloid in the $(2+1)$-dimensional Minkowski space as a Cauchy hypersurface of the chronological future of the origin,
which is identified with the hyperbolic upper half-plane $\mathbb{H}$.
We consider an observer who starts at an event $A$ on the hyperboloid with constant proper acceleration $a$, and in proper time $t$ arrives at an event $B$.

The sky (i.e.\ the set of all incoming light rays) of the event $B$ is viewed as a Legendrian submanifold of the spherical cotangent bundle $S^*\mathbb{H}$ of the hyperbolic plane $\mathbb{H}$ .
Next, we apply a version of the hodograph transform $\mathcal{H}$ \cite{AP,Fe,CN} between the jet space $J^1S^1$ and the spherical cotangent bundle $S^*\mathbb{H}$. The latter can be viewed as a thermodynamic phase space, with the directions of light rays in Minkowski space serving as control (intensive) variables.

Our main result (see Theorem \ref{thm-main} below) provides an explicit expression for the generating function of $\mathcal{H}^{-1}(S)$, the preimage of the sky $S$ of the event $B$. Namely, it yields a function $g_t$ on $S^1$ whose $1$-jet $j^1 g_t$ coincides with $\mathcal{H}^{-1}(S)$.

The formulas for $g_t$ naturally involve quantities of thermodynamic flavor, most notably the energy (given by the Doppler factor, i.e.\ the ratio of photon energies measured by the observers at $B$ and $A$, see Section \ref{sec-Doppler}). In the thermodynamic interpretation of the phase space $J^1S^1$, Legendrian submanifolds represent equilibrium states, and their generating functions serve as thermodynamic potentials, such as free energy. We therefore interpret $g_t$ as a reduced free energy.

After an appropriate Lorentzian time rescaling—motivated by the Unruh effect
\footnote{See Wikipedia contributors
 ``Unruh effect.'' Wikipedia, The Free Encyclopedia.},
according to which uniformly accelerating observers exhibit thermal behavior—the resulting formulas allow us to extract an effective temperature. The corresponding  temperature scales as $\sim a$, in agreement with the Unruh temperature scaling,
\begin{equation}
\label{eq-unruh}
T_{unruh}=\frac{a}{2\pi},
\end{equation}
although the numerical constant we obtain differs from the Unruh value.

Guided by this scaling, we construct a thermodynamic system whose reduced free energy reproduces the leading behavior of $g_t$ (see  Theorem \ref{thm-free-eta} in Section \ref{sec-free}).
This system is related to a classical rotor (see Section \ref{subsec-rot}).

The paper is organized as follows. In Section \ref{sec-Doppler} we discuss the Busemann
function of the hyperbolic plane and its relation to the photon energy measured by an observer, as well as to the Doppler effect. In Section \ref{sec-hodog} we introduce the hyperbolic hodograph transform, whose properties are proved in Section \ref{sec-Appendix-A}.
In Section \ref{sec-main} we present our model of an accelerating observer and state
our main result, Theorem \ref{thm-main},  which provides an expression for the generating function of the sky of the observer under the hodograph transformation.
Section \ref{sec-free} relates this generating function to the free energy of a thermodynamic system.

\medskip
\noindent
{\bf Convention on units:}  Unless otherwise stated, all quantities in this paper are dimensionless. In particular, we work in normalized geometric units associated with the unit hyperboloid, so that proper time, acceleration, photon energy, and temperature are all understood after division by the appropriate reference scales. We explain how to restore physical units in Section \ref{subsec-units}.

\section{Busemann Function as a Logarithmic Energy Profile}\label{sec-Doppler}

Let $K \subset \mathbb{R}^{1,2}$ be the unit future hyperboloid
\[
K = \{ x \in \mathbb{R}^{1,2} \mid
\langle x,x\rangle = -1, \ x_0>0 \},
\]
equipped with the hyperbolic metric induced by the Minkowski form
\[
\langle u,v\rangle
=
- u_0 v_0 + u_1 v_1 + u_2 v_2.
\]
Denote by $\partial_\infty K$ the ideal boundary of $K$ which we identify
with the set of future light-like rays starting at the origin. Note that
$\partial_\infty K$ is diffeomorphic to the circle.

Fix a ray $q \in \partial_\infty K$. Let $\gamma(t)$ be any geodesic ray in $K$ asymptotic to $q$,
parametrized by hyperbolic arclength. Write $d$ for the hyperbolic distance.
The \emph{Busemann function} associated with $q$ is defined by
\[
b_q(x)
=
\lim_{t\to\infty}
\big( d(x,\gamma(t)) - t \big), \;\; x \in K\;.
\]
The Busemann function associated with $q$ is defined only up to an additive constant; shifting the origin of the geodesic ray asymptotic to $q$ changes $b_q$ by a constant. Fixing a reference observer  removes this ambiguity.

Let us calculate the Busemann function.
Fix a reference point $A \in K$. Take any ray $q \in \partial_\infty K$. Choose the (unique) null vector $\hat q$ spanning $q \in \partial_\infty K$, so that
\begin{equation}\label{eq-norm}
\langle A, \hat q\rangle = -1.
\end{equation}
One readily checks that the curve
$$\gamma(t):= \frac{e^t}{2}\hat{q} + \frac{e^{-t}}{2}(2A-\hat{q}), t \geq 0$$
is a geodesic ray with the natural parameterization which lies on $K$, satisfies
$\gamma(0)=A$,  and which is
asymptotic to $q$ as $t \to +\infty$.

 In the hyperboloid model, the hyperbolic distance satisfies
\[
\cosh(d(x,y))
=
- \langle x,y\rangle.
\]
Thus
$$d(x,\gamma(t)) = \operatorname{arccosh}\left(-\frac{e^t}{2}\langle x, \hat{q}\rangle - \frac{e^{-t}}{2}\langle x, 2A-\hat{q}\rangle\right)\;.$$
Using the asymptotic expansion
\[
\operatorname{arccosh}(z)=\log(2z)+o(1)
\qquad\text{as }z\to\infty,
\]
we obtain
\[
d(x,\gamma(t))
=
t+\log(-\langle x,\hat q\rangle)+o(1).
\]
Thus, the Busemann function is given by
\begin{equation}\label{eq-bus-1}
b_q(x)
=
\log(-\langle x,\hat q\rangle).
\end{equation}
In what follows we fix once and for all a reference observer $A\in K$.
Equality \eqref{eq-norm} removes the scaling ambiguity for the choice of a light vector in the given ray.

Following \cite{Pereira}, Section 3.3, put $\varepsilon_x(q):= -\langle  x, \hat q\rangle$.
When $x\in K$, it is a unit future--directed timelike vector and hence can
be interpreted as an observer. In this case, $\varepsilon_x(q)$ is the energy
(proportional to the frequency) of the photon with null momentum $\hat q$
measured by the observer represented by $x$.
By construction, $\varepsilon_{A}(q)=1$ for all $q$.
By \eqref{eq-bus-1}
\begin{equation}\label{eq-bus-2}
b_q(x)=\log \frac{\varepsilon_x(q)}{\varepsilon_{A}(q)}= \log \varepsilon_x(q)\;.
\end{equation}
The dimensionless ratio ${\varepsilon_x(q)}/{\varepsilon_{A}(q)}$  reflects the relativistic Doppler effect.

Let us mention that the quantities appearing in our formulas — namely, the photon energy ratios (Doppler factors) — coincide with the redshift factors, which, as shown in \cite{CN-red}, are given by the ratios of contact forms on the space of light rays associated to different Cauchy hypersurfaces.
\section{Space of light rays and hodograph transform}\label{sec-hodog}

The space of future light geodesics in Minkowski space can be
identified with the unit cosphere bundle $S^*K$, which we identify with $S^* \mathbb{H}$.
This is a contact manifold. Let us describe the contactomorphism
\[
\mathcal{H} : J^1 S^1 \longrightarrow S^*\mathbb{H},
\]
which we call the \emph{hyperbolic hodograph transform}.
It identifies the unit cosphere bundle of the hyperbolic plane
with the 1--jet space of the boundary circle.

The 1--jet space
\[
J^1 S^1 = T^*S^1 \times \mathbb{R}
\]
carries the standard contact form
\[
\alpha = dz - p\,dq,
\]
where $(q,p,z)$ are coordinates with
$q \in S^1$, $p \in T^*_q S^1$, and $z \in \mathbb{R}$.
Its Reeb vector field is $R = \partial_z$.

The unit cosphere bundle
\[
S^*\mathbb{H} = \{ (u,\xi) \in T^*\mathbb{H} : |\xi|_{g_{\mathbb{H}}}=1 \}
\]
carries the Liouville contact form $\lambda$.
Its Reeb flow is the geodesic flow of the hyperbolic metric.

To define the hodograph transform, we view the Busemann function as a function
$b_q : \mathbb{H} \to \mathbb{R}$.
Given $(q,p,z) \in J^1 S^1$, we set
\[
\mathcal{H}(q,p,z) = (u, db_q(u)),
\]
where $u \in \mathbb{H}$ is determined by the conditions
\begin{align*}
b_q(u) &= z, \\
\frac{\partial b_q(u)}{\partial q} &= p.
\end{align*}

\begin{proposition}\label{prop-hod}
The map $\mathcal{H}$ is a contactomorphism, and furthermore
\[
\mathcal{H}^*\lambda
=
\alpha\;,
\]
\end{proposition}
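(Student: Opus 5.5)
The natural approach is to pull back the tautological one-form rather than compute in coordinates on $\mathbb{H}$. Consider the auxiliary map $F: S^1\times\mathbb{H}\to J^1S^1$ given by $F(q,u)=(q,\partial_q b_q(u), b_q(u))$, and the map $G: S^1\times \mathbb{H}\to S^*\mathbb{H}$ given by $G(q,u)=(u,db_q(u))$. Here $db_q$ denotes the differential in $u$; it lies in $S^*\mathbb{H}$ because the Busemann function satisfies $|\nabla b_q|=1$, which can be checked directly from \eqref{eq-bus-1}. By definition $\mathcal{H}\circ F=G$. I will therefore carry out three steps: (i) show that $F$ and $G$ are diffeomorphisms, so that $\mathcal{H}=G\circ F^{-1}$ is well defined and smooth; (ii) show that $F^*\alpha=G^*\lambda$; and (iii) conclude $\mathcal{H}^*\lambda=\alpha$, which in particular implies that $\mathcal{H}$ is a contactomorphism.

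Step (ii) is a short computation. The Liouville form is $\lambda=\xi\,du$, so $G^*\lambda=d_u b_q$, the partial differential in $u$ of the function $B(q,u)=b_q(u)$ on $S^1\times\mathbb{H}$. On the other side,
\[
F^*\alpha = dB - \partial_q B\,dq = (d_u B+\partial_q B\,dq)-\partial_q B\,dq = d_u B .
\]
Thus $F^*\alpha=G^*\lambda$, and step (iii) follows immediately once $F$ is known to be invertible.

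Step (i) is the main obstacle, since the definition of $\mathcal{H}$ tacitly asserts that $u$ is uniquely determined by $(q,p,z)$. For $G$ the argument is standard. A unit covector at $u$ is determined by its direction, which is the endpoint at infinity of the geodesic ray it generates. The gradient of $b_q$ at $u$ points away from $q$, so each $(u,\xi)$ corresponds to a unique $q$; this identification is smooth and has smooth inverse.

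For $F$, I would compute explicitly in the hyperboloid model. Parametrize $q$ by an angle $\theta$ via $\hat q(\theta)$, normalized as in \eqref{eq-norm}. Taking $A=(1,0,0)$, this gives $\hat q=(1,\cos\theta,\sin\theta)$. Then $e^{z}=-\langle u,\hat q\rangle$ and $p\,e^{z}=-\langle u,\hat q'\rangle$. For fixed $\theta$ the vectors $\hat q$, $\hat q'=(0,-\sin\theta,\cos\theta)$, and $\hat q''=-\hat q+2A$ form a basis. The two linear conditions cut out a line in $\mathbb{R}^{1,2}$, and I would check that this line meets $K$ in exactly one point. Write $u=\hat q\,\alpha+\hat q'\beta+A\gamma$. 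The condition $\langle u,\hat q\rangle=-\gamma$ fixes $\gamma=e^{z}$, and $\langle u,\hat q'\rangle=\beta$ fixes $\beta=-pe^{z}$. Since $\langle A,\hat q\rangle=-1$ makes $\alpha$ appear linearly in the constraint $\langle u,u\rangle=-1$ with coefficient $-2\gamma\neq0$, $\alpha$ is uniquely determined. One also needs $u_0>0$, which follows from $\gamma>0$.

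This yields an explicit smooth inverse of $F$. Equivalently, it shows that the horocycle $\{b_q=z\}$ meets $\{\partial_q b_q=p\}$ transversally in exactly one point.
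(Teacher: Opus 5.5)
Your proposal is correct and follows the paper's proof. Your identity $F^*\alpha=G^*\lambda$ is exactly the paper's computation $dz=p\,dq+V\,dU$, made precise by your factorization $\mathcal{H}=G\circ F^{-1}$, and your inversion of $F$ and $G$ corresponds to the paper's Proposition~\ref{prop:hodograph-diffeo}, except that you invert in the hyperboloid model (where both conditions are linear in $u$) rather than solving the quadratic system in the disk model; one harmless slip is that $\hat q''=A-\hat q$, not $2A-\hat q$, but you never use it since your basis is $\{\hat q,\hat q',A\}$.
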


The hodograph transformation is well known, see  \cite{Ar,AP,CN}.
The hyperbolic version which we are using is a minor modification of the
construction in \cite{Fe}. For reader's convenience, we present a proof of
Proposition \ref{prop-hod} in Section \ref{sec-Appendix-A}, while in Section \ref{sec-Appendix-B} we revise the Euclidean hodograph transformation from the viewpoint of
the Busemann functions.

\medskip

Time $s$ geodesic flow shifts the Busemann function by a constant,
\[
b_q \mapsto b_q + s,
\]
so in $J^1S^1$ this flow corresponds to the vertical translation along the $z$-axis.

\medskip

Let us now discuss the action of the inverse hodograph map on skies. For $u \in \mathbb{H}$,
the fiber $S^*_{u}\mathbb{H}$
is mapped by $\mathcal{H}^{-1}$ to the Legendrian graph $j^1f$ of $f(q)= b_q(u)$.
Hence the skies lying on the Cauchy hypersurface correspond to the Legendrian graphs of Busemann functions, and their images under the time $s$ maps of the geodesic flow correspond to the Legendrian graphs of $b_q(u) + s$. Combining this consideration with
\eqref{eq-bus-2}, we get the following result.

\begin{proposition}\label{prop-hod-1} Let $S_{u,s}$ be the image of $S^*_{u}\mathbb{H}$
under time $s$ map of the geodesic flow. Then $\mathcal{H}^{-1}(S_{u,s})$
is generated by
\begin{equation}
\label{eq-genf-prem}
g_s(q) = s+\log \varepsilon_u(q)\;.
\end{equation}
\end{proposition}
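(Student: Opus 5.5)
The plan is to combine Proposition \ref{prop-hod} with the explicit Busemann formula \eqref{eq-bus-2}. The argument has two parts: first show that the fiber $S^*_u\mathbb{H}$ corresponds under $\mathcal{H}^{-1}$ to the $1$-jet graph $j^1 f$ with $f(q)=b_q(u)$, and then account for the geodesic flow.

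\emph{The fiber.} By the definition of $\mathcal{H}$, the point $(q,p,z)$ is sent to $(u,db_q(u))$, where $u$ is fixed by the two equations $b_q(u)=z$ and $\partial_q b_q(u)=p$. If we fix $u$ and let $q$ run over $\partial_\infty K\cong S^1$, then $(q,\partial_q b_q(u), b_q(u))$ runs over exactly $j^1f$. Its image consists of the covectors $db_q(u)$.

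We must check that these are precisely the unit covectors at $u$. Since $b_q$ is a Busemann function, $|db_q|=1$. The map $q\mapsto -\nabla b_q(u)$ sends $q$ to the unit tangent vector at $u$ pointing toward $q$, and this map is a diffeomorphism $S^1\to S_u\mathbb{H}$. Hence $\mathcal{H}(j^1f)=S^*_u\mathbb{H}$.

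The one point that needs care is well-definedness: the pair $(z,p)$ must determine $u$ uniquely, so that $\mathcal{H}^{-1}$ of the fiber is all of $j^1f$ and nothing more. This is part of Proposition \ref{prop-hod}, which says $\mathcal{H}$ is a diffeomorphism, and I will take it as given. If a direct check is wanted, one can observe that the horocycle $\{b_q=z\}$ is parametrized by $p$ monotonically.

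\emph{The geodesic flow.} The Reeb flow of $\lambda$ is the geodesic flow. By Proposition \ref{prop-hod}, $\mathcal{H}^*\lambda=\alpha$, and a strict contactomorphism intertwines Reeb flows, so $\mathcal{H}$ conjugates the geodesic flow to the Reeb flow $\partial_z$ of $\alpha$. The latter is translation $z\mapsto z+s$, which is the remark made just before the statement. Translating $j^1f$ in $z$ by $s$ gives $j^1(f+s)$.

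\emph{Conclusion.} Therefore $\mathcal{H}^{-1}(S_{u,s})=j^1(f+s)$. By \eqref{eq-bus-2}, $f(q)=b_q(u)=\log\varepsilon_u(q)$, with $\hat q$ normalized by \eqref{eq-norm}. This gives $g_s(q)=s+\log\varepsilon_u(q)$.

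I expect the only real obstacle to be the fiber identification, specifically the bijectivity of $q\mapsto db_q(u)$ onto the unit cosphere. I would handle it through the hyperbolic geometry fact that each ideal point is reached by exactly one geodesic ray from $u$.
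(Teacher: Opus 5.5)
Your proposal is correct and follows essentially the paper's argument: identify $\mathcal{H}^{-1}(S^*_u\mathbb{H})$ with $j^1 f$ for $f(q)=b_q(u)$, realize the time-$s$ geodesic flow as the vertical translation $z\mapsto z+s$ in $J^1S^1$, and conclude via \eqref{eq-bus-2}. The only nuance is in how the second step is justified: you derive it from $\mathcal{H}^*\lambda=\alpha$, since a strict contactomorphism intertwines Reeb flows, whereas the paper phrases it as the geodesic flow shifting $b_q$ to $b_q+s$; these express the same fact.
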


\section{Case study: accelerated motion}\label{sec-main}

\subsection{Set up}\label{subsec-setup}

We view the unit future hyperboloid
\[
K=\{x\in\mathbb{R}^{1,2}\mid \langle x,x\rangle=-1,\ x_0>0\},
\]
as a Cauchy hypersurface of the chronological future of the origin, i.e.,
of the cone
\[
\{(t,x,y)\in \mathbb{R}^{1,2}\;:\; t>0,\; x^2+y^2<t^2\}.
\]
Let
\[
A=(1,0,0)\in K
\]
be the reference observer used for the normalization of photon energies.

Let $n\in T_AK$ be a spatial unit vector:
\[
\langle n,n\rangle=1, \qquad \langle A,n\rangle=0 .
\]

Consider the uniformly accelerated worldline starting at $A$ with
proper acceleration $a>0$ and initial four--velocity $A$. It has the
explicit form
\[
\gamma(t)=
\Bigl(1+\frac{\sinh(at)}{a}\Bigr)A
+
\frac{\cosh(at)-1}{a}\,n ,
\]
where $t$ is the proper time. Denote
\[
B=\gamma(t).
\]

The point $B$ lies in the interior of the future light cone.
To describe the sky of $B$ on the fixed hypersurface $K$, we project
$B$ radially to $K$ by normalizing its Minkowski length:
\[
B'=\frac{B}{\sqrt{-\langle B,B\rangle}} .
\]
Then $\langle B',B'\rangle=-1$, so $B'\in K$.

Geometrically, $B'$ is the unique point of $K$ lying on the ray
$\mathbb{R}_{>0}B$.

The past light cone of $B$ intersects $K$ in the hyperbolic circle
\[
\{u\in K\mid d(u,B')=\rho(t)\},
\]
where
\begin{equation}\label{eq:rho}
\rho(t):=\frac{1}{2}\log\bigl(-\langle B,B\rangle\bigr),
\end{equation}
and $d$ is the hyperbolic distance on $K$.

Indeed, a point $u\in K$ lies on the past light cone of $B$ if and only if
\[
\langle u-B,u-B\rangle=0.
\]
Write
\[
\lambda:=\sqrt{-\langle B,B\rangle}>0,
\qquad
B=\lambda B'.
\]
Since $u,B'\in K$, we have
\[
\langle u,u\rangle=\langle B',B'\rangle=-1.
\]
Therefore
\[
0=\langle u-B,u-B\rangle
=\langle u,u\rangle-2\langle u,B\rangle+\langle B,B\rangle
=-1-2\lambda\langle u,B'\rangle-\lambda^2.
\]
Hence
\[
-\langle u,B'\rangle=\frac{\lambda+\lambda^{-1}}{2}.
\]
On the other hand, in the hyperboloid model the hyperbolic distance satisfies
\[
\cosh d(u,B')=-\langle u,B'\rangle.
\]
Thus
\[
\cosh d(u,B')
=
\frac{\lambda+\lambda^{-1}}{2}
=
\cosh(\log \lambda).
\]
Since $d(u,B')\ge 0$ and $\lambda>1$, we get
\[
d(u,B')=\log \lambda
=\frac12\log\bigl(-\langle B,B\rangle\bigr).
\]
This proves that the intersection of the past light cone of $B$ with $K$ is precisely
\[
\{u\in K\mid d(u,B')=\rho(t)\},
\]
where
$$
\rho(t)=\frac12\log\bigl(-\langle B,B\rangle\bigr).
$$
Thus the sky of $B$ on $K$ is the unit conormal lift
$
S_{B',-\rho(t)},
$
where
\begin{equation}
\rho(t)
=
\frac{1}{2}\log\!\left(
1
+
\frac{2}{a}\sinh(at)
+
\frac{2}{a^2}\bigl(\cosh(at)-1\bigr)
\right).
\end{equation}

\subsection{The measured photon energy.}

Let
\begin{equation}\label{eq-fourvel}
u_t:=\dot\gamma(t)=\cosh(at)\,A+\sinh(at)n
\end{equation}
be the four--velocity of the accelerated observer. Consider a photon with
the null momentum $\hat q$.  The corresponding
photon energy, as measured by the observer, is
\begin{equation}\label{eq-phen}
E_t(q):=-\langle u_t,\hat q\rangle
=
\cosh(at)-\sinh(at)\langle n,\hat q\rangle .
\end{equation}
For small $t$,
\begin{equation}\label{eq-Et-small}
E_t(q) = 1- a \langle n, \hat q\rangle t + O(t^2)\;.
\end{equation}

\subsection{Main theorem}

\begin{theorem}\label{thm-main}
\begin{itemize}
\item[{(i)}] The generating function $g_t$ of
$\mathcal{H}^{-1}(S_{B',-\rho(t)})$ is given by
\begin{equation}\label{eq-g_t-doppler}
g_t(q)=\log \varepsilon_{B'}(q)-\rho(t).
\end{equation}
\item[{(ii)}] The generating function $g_t(q)$ admits the following explicit expression in terms of $E_t(q)$:
 \begin{equation}
\begin{aligned}
g_t(q)
&=
\log\!\left(
1+\frac{1}{a}\tanh\!\left(\frac{at}{2}\right)(1+E_t(q))
\right) \\
&\quad
-
\log\!\left(
1+\frac{2}{a}\sinh(at)
+\frac{2}{a^2}(\cosh(at)-1)
\right).
\end{aligned}
\end{equation}
  \item[{(iii)}] We have the following small $t$ expansion:
\begin{equation}\label{eq-genf-main}
g_t(q)
=
-\frac{3}{2}t
+
\frac{1}{2}\,t^2
+
\frac{1}{2}\,tE_t(q)
+
O(t^3).
\end{equation}
\end{itemize}
%Its small-$t$ expansion has the form
%\begin{equation}\label{eq-genf-main}
%g_t(q)
%=
%-t+\Bigl(\frac{1}{2}-\pi T\,\langle n,\hat q\rangle\Bigr)t^2
%+O(t^3),
%\end{equation}
%where $T$ is the Unruh temperature.
\end{theorem}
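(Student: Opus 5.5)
For part (i) I would apply Proposition \ref{prop-hod-1} directly. In Section \ref{subsec-setup} we showed that the sky of $B$ on $K$ is $S_{B',-\rho(t)}$, the image of the fiber $S^*_{B'}\mathbb{H}$ under the time $s=-\rho(t)$ map of the geodesic flow. With $u=B'$ and $s=-\rho(t)$, formula \eqref{eq-genf-prem} gives $g_t(q)=\log\varepsilon_{B'}(q)-\rho(t)$ at once. The only point to state explicitly is the sign convention: the past light cone corresponds to flowing backwards by $\rho(t)$.

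For part (ii) I would express everything through $B$. Since $B=\lambda B'$ with $\lambda=e^{\rho(t)}$, we have $\varepsilon_{B'}(q)=-\lambda^{-1}\langle B,\hat q\rangle$. Hence
\[
g_t=\log(-\langle B,\hat q\rangle)-2\rho(t)=\log(-\langle B,\hat q\rangle)-\log(-\langle B,B\rangle),
\]
and the second term is already the displayed second logarithm. Using $\langle A,\hat q\rangle=-1$ from \eqref{eq-norm} and the explicit form of $\gamma$,
\[
-\langle B,\hat q\rangle=1+\frac{\sinh(at)}{a}-\frac{\cosh(at)-1}{a}\langle n,\hat q\rangle .
\]
Next I would eliminate $\langle n,\hat q\rangle$ via \eqref{eq-phen}, which gives $\langle n,\hat q\rangle=(\cosh(at)-E_t(q))/\sinh(at)$ for $t>0$. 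Using $(\cosh(at)-1)/\sinh(at)=\tanh(at/2)$, the bracket becomes $\frac1a\bigl[\sinh(at)-\tanh(\tfrac{at}{2})\cosh(at)+\tanh(\tfrac{at}{2})E_t\bigr]$. With $h=at/2$, the identity $\sinh 2h-\tanh h\cosh 2h=\tanh h$ (clear denominators by $\cosh h$ and use $\cosh 2h=2\cosh^2h-1$) reduces this to $\frac1a\tanh(\tfrac{at}{2})(1+E_t)$, which is the claimed formula. The case $t=0$ follows trivially or by continuity.

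For part (iii) I would Taylor expand the formula from (ii). We have $\frac1a\tanh(at/2)=t/2+O(t^3)$, so the first argument is $1+x$ with $x=\tfrac t2(1+E_t)+O(t^3)$. Since $E_t=1+O(t)$ uniformly in $q$ (because $|\langle n,\hat q\rangle|$ is bounded on the normalized null vectors), we get $x^2=t^2+O(t^3)$. Hence
\[
\log(1+x)=\tfrac t2+\tfrac t2E_t-\tfrac{t^2}{2}+O(t^3).
\]
For the second argument, $\frac2a\sinh(at)=2t+O(t^3)$ and $\frac2{a^2}(\cosh(at)-1)=t^2+O(t^4)$, so it equals $(1+t)^2+O(t^3)$ and its logarithm is $2t-t^2+O(t^3)$. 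Subtracting gives $-\tfrac32t+\tfrac12t^2+\tfrac12tE_t+O(t^3)$.

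The main obstacle is the algebra in (ii), specifically spotting the hyperbolic identity that turns the $\sinh$ and $\cosh$ terms into a single $\tanh(at/2)$. In (iii) the only delicate point is to keep $E_t$ unexpanded, so that the $a$-dependence is absorbed into $E_t$ and the remainder is $O(t^3)$ uniformly in $q$.
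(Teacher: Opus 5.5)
Your proposal is correct and follows the paper's proof essentially step by step. Part (i) is Proposition~\ref{prop-hod-1} with $u=B'$ and $s=-\rho(t)$. Part (ii) uses the same substitution $\langle n,\hat q\rangle=(\cosh(at)-E_t(q))/\sinh(at)$; your rewriting via $-\lambda^{-1}\langle B,\hat q\rangle$ is just the paper's $(\alpha-\beta\langle n,\hat q\rangle)/\sqrt{D}$. Part (iii) is the same Taylor expansion.

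Your additions fill small gaps the paper leaves implicit:
\begin{itemize}
\item the explicit identity $\sinh 2h-\tanh h\cosh 2h=\tanh h$ behind the paper's ``combining the terms'';
\item the separate treatment of $t=0$, where the division by $\sinh(at)$ is not allowed;
\item the remark that the $O(t^3)$ remainder is uniform in $q$, since $|\langle n,\hat q\rangle|\le 1$.
\end{itemize}
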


\medskip

The right-hand sides of \eqref{eq-g_t-doppler} and \eqref{eq-genf-main}
are closely related to the relativistic Doppler effect for accelerating observers.
In Section \ref{sec-free} we interpret $g_t$ in terms of free energy.

\begin{proof}
\medskip\noindent{\bf ({i}):} Formula \eqref{eq-g_t-doppler} follows from Proposition~\ref{prop-hod-1}
with $s=-\rho(t)$.

\medskip\noindent{\bf ({ii}):}
By part {\rm(i)},
\[
g_t(q)=\log \varepsilon_{B'}(q)-\rho(t).
\]
Recall that
\[
\varepsilon_{B'}(q)
=
\frac{\alpha(t)-\beta(t)\langle n,\hat q\rangle}
{\sqrt{D(t)}},
\qquad
\rho(t)=\frac12\log D(t),
\]
where
\[
\alpha(t)=1+\frac{\sinh(at)}{a},
\qquad
\beta(t)=\frac{\cosh(at)-1}{a},
\]
and
\[
D(t)=
1+\frac{2}{a}\sinh(at)+\frac{2}{a^2}\bigl(\cosh(at)-1\bigr).
\]
Hence
\[
g_t(q)
=
\log\!\bigl(\alpha(t)-\beta(t)\langle n,\hat q\rangle\bigr)
-
\log D(t).
\]

Now
\[
E_t(q)=\cosh(at)-\sinh(at)\langle n,\hat q\rangle,
\]
so
\[
\langle n,\hat q\rangle
=
\frac{\cosh(at)-E_t(q)}{\sinh(at)}.
\]
Therefore
\[
\alpha(t)-\beta(t)\langle n,\hat q\rangle
=
1+\frac{\sinh(at)}{a}
-
\frac{\cosh(at)-1}{a}\cdot
\frac{\cosh(at)-E_t(q)}{\sinh(at)}.
\]
Combining the terms gives
\[
\alpha(t)-\beta(t)\langle n,\hat q\rangle
=
1+\frac{\cosh(at)-1}{a\,\sinh(at)}\bigl(1+E_t(q)\bigr).
\]
Using
\[
\frac{\cosh(at)-1}{\sinh(at)}
=
\tanh\!\left(\frac{at}{2}\right),
\]
we obtain
\[
\alpha(t)-\beta(t)\langle n,\hat q\rangle
=
1+\frac{1}{a}\tanh\!\left(\frac{at}{2}\right)\bigl(1+E_t(q)\bigr).
\]
Substituting this into the formula for $g_t(q)$ yields
\[
\begin{aligned}
g_t(q)
&=
\log\!\left(
1+\frac{1}{a}\tanh\!\left(\frac{at}{2}\right)(1+E_t(q))
\right)
\\
&\quad
-
\log\!\left(
1+\frac{2}{a}\sinh(at)
+\frac{2}{a^2}(\cosh(at)-1)
\right),
\end{aligned}
\]
as claimed.

\medskip\noindent{\bf ({iii}):}
Using the formula proved in {\rm(ii)},
\[
\begin{aligned}
g_t(q)
&=
\log\!\left(
1+\frac{1}{a}\tanh\!\left(\frac{at}{2}\right)(1+E_t(q))
\right)
\\
&\quad
-
\log\!\left(
1+\frac{2}{a}\sinh(at)
+\frac{2}{a^2}(\cosh(at)-1)
\right).
\end{aligned}
\]

Since
\[
\frac1a\tanh\!\left(\frac{at}{2}\right)=\frac t2+O(t^3),
\qquad
E_t(q)=1+O(t),
\]
we get
\[
\frac1a\tanh\!\left(\frac{at}{2}\right)(1+E_t(q))
=
\frac t2(1+E_t(q))+O(t^3).
\]
Hence
\[
\log\!\left(
1+\frac1a\tanh\!\left(\frac{at}{2}\right)(1+E_t(q))
\right)
=
\frac t2(1+E_t(q))
-\frac{t^2}{8}(1+E_t(q))^2
+O(t^3).
\]
Since \(E_t(q)=1+O(t)\), we have
\[
(1+E_t(q))^2=4+O(t),
\]
and therefore
\[
\log\!\left(
1+\frac1a\tanh\!\left(\frac{at}{2}\right)(1+E_t(q))
\right)
=
\frac t2+\frac t2E_t(q)-\frac12 t^2+O(t^3).
\]

Also,
\[
\sinh(at)=at+O(t^3),
\qquad
\cosh(at)-1=\frac{a^2t^2}{2}+O(t^4),
\]
so
\[
1+\frac{2}{a}\sinh(at)+\frac{2}{a^2}(\cosh(at)-1)
=
1+2t+t^2+O(t^3),
\]
and thus
\[
\log\!\left(
1+\frac{2}{a}\sinh(at)+\frac{2}{a^2}(\cosh(at)-1)
\right)
=
2t-t^2+O(t^3).
\]

Subtracting,
\[
g_t(q)
=
\left(\frac t2+\frac t2E_t(q)-\frac12 t^2\right)
-
(2t-t^2)
+O(t^3),
\]
which gives
\[
g_t(q)
=
-\frac32 t+\frac12 t^2+\frac12 tE_t(q)+O(t^3).
\]
This proves \eqref{eq-genf-main}.
\end{proof}

\medskip
\noindent
\begin{remark}{\rm
The use of the hyperboloid $K$ as a Cauchy hypersurface is essential for
the thermodynamic interpretation.

Indeed, the Busemann function $b_q(x)$ depends on the spacetime event $x$,
while the photon energy $\varepsilon_u(q)=-\langle u,\hat q\rangle$
is defined with respect to an observer, i.e.\ a future unit timelike vector $u$.

On the hyperboloid $K$ these notions coincide, since every point
$x\in K$ can be naturally interpreted as an observer with velocity $x$.
This yields the identity
\[
b_q(x)=\log \varepsilon_x(q),
\]
which underlies the free energy interpretation.

In contrast, if one chooses a Euclidean Cauchy hypersurface such as
$\{t=t_0\}$, there is no canonical identification between points and
observer velocities. As a result, the Busemann function and the photon
energy depend on different geometric objects, and no natural energy-type
term arises on the thermodynamic side.}
\end{remark}

\subsection{Comparison with the Unruh temperature}\label{subsec-unruh}

\paragraph{Derivation of the effective inverse temperature.}
We start from \eqref{eq-genf-main}:
\[
g_t(q)= -\frac{3}{2}\,t+\frac{1}{2}\,t^2+\frac{1}{2}\,t\,E_t(q)+O(t^3),
\]
where
\[
E_t(q)=\cosh(at)-\sinh(at)\langle n,\hat q\rangle.
\]

Introduce the new time parameter (called a boost parameter or the Rindler time)\footnote{The boost parameter $\eta$ is dimensionless (it is a rapidity). In physical units one has $\eta = a\tau/c$, where $c$ is the speed of light.}
\[
\eta:=a t,
\qquad t=\frac{\eta}{a},
\]
and define
\begin{equation}\label{eq-deff}
\widetilde g_\eta(q):=g_{\eta/a}(q),
\qquad
f_\eta(q):=\frac{\widetilde g_\eta(q)}{\eta}
=\frac{g_{\eta/a}(q)}{\eta}.
\end{equation}

Substituting $t=\eta/a$ into the expansion of $g_t$ yields
\[
\widetilde g_\eta(q)
=
-\frac{3\eta}{2a}
+\frac{\eta^2}{2a^2}
+\frac{\eta}{2a}E_{\eta/a}(q)
+O(\eta^3).
\]
Dividing by $\eta$, we obtain
\begin{equation}\label{eq-f-forent}
f_\eta(q)
=
-\frac{3}{2a}
+\frac{\eta}{2a^2}
+\frac{1}{2a}E_{\eta/a}(q)
+O(\eta^2).
\end{equation}

Hence, up to a $q$-independent term,
\begin{equation}\label{eq-fe-affine}
f_\eta(q)
=
C(\eta)+\frac{1}{2a}E_{\eta/a}(q)+O(\eta^2),
\end{equation}
where $C(\eta)$ does not depend on $q$.

Recall that the reduced free energy (or the Massieu function, with the opposite sign)
of a thermodynamic system equals $\beta U - S$, where $S$ is the entropy, and $U$ the internal energy. At this point we make an ``ansatz" and interpret $f_\eta$ as the reduced free energy, and $E=E_{\eta/a}(q)$ as the internal energy.

We consider $f_\eta$ as a function of $E$
(cf. Theorem \ref{thm-main})(i)) and define the effective inverse temperature by
\[
\beta_{\mathrm{eff}}
:=
\left.\frac{\partial f_\eta}{\partial E}\right|_{\eta=0}.
\]
Using \eqref{eq-fe-affine}, we obtain
\begin{equation}\label{eq-beta-eff}
\beta_{\mathrm{eff}}
=
\frac{1}{2a}.
\end{equation}
 The assumption that $f_\eta$ is a kind of free energy (or, more generally, a thermodynamic potential) is motivated by the fact that we view the jet space $J^1S^1$ as a thermodynamic phase space, and the physically meaningful Legendrians as equilibrium submanifolds of a thermodynamic system.

\paragraph{Relation to the boost parameter and the Unruh effect.}
The parameter $\eta=at$ admits a natural interpretation as the \emph{Lorentz boost parameter} along the worldline of a uniformly accelerating observer. Indeed, the velocity of the observer is given by
\[
\dot\gamma(t)=\cosh(at)\,A+\sinh(at)\,n,
\]
so that $at$ is precisely the rapidity (hyperbolic angle) of the corresponding Lorentz transformation.

%In quantum field theory, the Unruh effect is formulated in terms of evolution with respect to the boost parameter $\eta$.\footnote{More precisely, the Minkowski vacuum, when restricted to a Rindler wedge, is a thermal (KMS) state with respect to the generator of Lorentz boosts. The associated imaginary-time periodicity is $2\pi$ in the variable $\eta$ \cite{W}.}
%We treat this as a black box and use only the fact that the natural time variable in the Unruh effect is the boost parameter $\eta$, rather than the proper time $t$.

In quantum field theory, the boost parameter $\eta$ plays an important role in the derivation of the Unruh effect.\footnote{More precisely, the Minkowski vacuum, when restricted to a Rindler wedge, is a thermal (KMS) state with respect to the generator of Lorentz boosts. The associated imaginary-time periodicity is $2\pi$ in the variable $\eta$ \cite{W}.}
We take this as a black box and use $\eta$ in the subsequent considerations.

Rewriting the generating function $g_t$ in terms of $\eta$ and introducing the normalized quantity $f_\eta$ therefore allows for a direct comparison with the thermodynamic structure underlying the Unruh effect. Formula \eqref{eq-beta-eff} shows that the effective inverse temperature has the same $a^{-1}$ scaling as the Unruh temperature \eqref{eq-unruh}.

\paragraph{Heuristic connection to the Unruh effect.}
The discussion below is based on the simplified derivation of the Unruh effect \cite{AM}.

An accelerating observer may be viewed as probing the ambient field by receiving waves arriving from all lightlike directions $q\in S^1$. Each such direction corresponds to a mode of propagation, and the observer measures its energy through the relativistic Doppler factor
\[
E_t(q)=-\langle \dot\gamma(t),\hat q\rangle.
\]
For an inertial observer these energies would be constant in time, but along an accelerated trajectory they acquire a nontrivial time dependence. In particular, the frequencies of the incoming waves become time-dependent, so that each mode is perceived not as a pure harmonic oscillation, but as a signal with time-dependent frequency (a so-called ``chirped'' signal) whose instantaneous frequency evolves along the trajectory.

The total signal seen by the observer is therefore a superposition over all directions $q$ of such oscillations with time-varying frequencies. A natural way to analyze this signal is to pass to its spectral decomposition in the observer’s proper time. In heuristic derivations of the Unruh effect, this spectral analysis yields a distribution of frequencies of the form
\[
N(\Omega)\propto \frac{1}{e^{2\pi \Omega c/a}-1}.
\]
Here $N(\Omega)$ denotes the spectral density of the signal observed along the trajectory, that is, the squared modulus of its Fourier transform, representing the distribution of frequencies in proper time. This has exactly the Planck form
\[
\frac{1}{e^{\hbar\Omega/(kT)}-1},
\]
and the temperature is then identified by matching the exponents. This derivation, following \cite{AM}, yields the Unruh temperature \eqref{eq-unruh} (in units where $\hbar = k = c = 1$).

In our model, instead of performing a Fourier analysis in time, we encode the same Doppler energy profile geometrically. The function
\[
g_t(q)=\log \varepsilon_{B'}(q)-\rho(t)
\]
records, for each direction $q$, the logarithmic energy of the corresponding mode. In this sense, $g_t$ may be viewed as a generating function that packages the directional dependence of the observed frequencies. Its small-time expansion involves the energies $E_t(q)$, and, after an appropriate rescaling, it admits an interpretation as a reduced free energy. The derivative of this quantity with respect to the energy yields an effective inverse temperature, which scales linearly with the acceleration, in agreement with the Unruh scaling.

\subsection{Restoration of physical units}\label{subsec-units}

So far, we have worked with the unit hyperboloid and therefore with
dimensionless quantities. In particular, the time variable $t$, the acceleration
parameter $a$, and the function $E_t(q)$ are all dimensionless.

To restore physical units, fix a length scale $R>0$ and replace the unit hyperboloid by
\[
K_R=\{x\in\mathbb{R}^{1,2}:\langle x,x\rangle=-R^2,\ x_0>0\}.
\]
Let $\tau$ denote the physical proper time and $\alpha$ the physical proper acceleration.
The dimensionless variables used throughout the paper are then given by
\[
t=\frac{c\tau}{R},
\qquad
a=\frac{\alpha R}{c^2},
\qquad
\eta=at=\frac{\alpha\tau}{c}.
\]

With this identification, all formulas of the previous sections remain valid without
modification. In particular, the function $E_t(q)$ depends on $t$ only through the
combination $at=\eta$, so that
\[
E_t(q)=E_{\eta/a}(q)=\cosh\eta-\sinh\eta\langle n,\hat q\rangle.
\]

We define the physical energy by
\[
\mathcal E_\eta(q):=\gamma E_{\eta/a}(q)
=\gamma\bigl(\cosh\eta-\sinh\eta\langle n,\hat q\rangle\bigr),
\]
where $\gamma>0$ is a constant with units of energy.

With this convention, formula \eqref{eq-f-forent} takes the form
\begin{equation}\label{eq-f-forent-units}
f_\eta(q)
=
-\frac{3}{2a}
+\frac{\eta}{2a^2}
+\frac{1}{2a\gamma}\,\mathcal E_\eta(q)
+O(\eta^2).
\end{equation}

Since $f_\eta$ is dimensionless, the coefficient of $\mathcal E_\eta$ has units of
inverse energy and can be interpreted as an effective inverse temperature:
\[
\beta_{\mathrm{eff}}=\frac{1}{2a\gamma}
=
\frac{c^2}{2\alpha R\,\gamma}.
\]
Accordingly,
\[
T_{\mathrm{eff}}=\frac{2\alpha R\,\gamma}{c^2}.
\]

In particular, the temperature is proportional to the acceleration, in agreement
with the Unruh scaling, although the numerical constant depends on the choice of
the energy scale $\gamma$.

\section{A statistical interpretation as reduced free energy}\label{sec-free}

In this section we introduce randomness in the direction of the spatial
acceleration vector and interpret the resulting generating function in
terms of reduced free energy.

\subsection{Randomization of the acceleration direction}

We model the spatial acceleration direction by a random unit vector
$n\in S^1$. Fix a preferred direction $n_0\in S^1$ and assume that $n$
is distributed according to the von Mises distribution
\[
d\mu_\kappa(n)
=
\frac{1}{2\pi I_0(\kappa)}
\exp\big(\kappa \langle n,n_0\rangle\big)\,d\phi,
\qquad \kappa\ge0,
\]
where $I_0$ is the modified Bessel function of the first kind, defined by
\[
I_0(\kappa)
=
\frac{1}{2\pi}\int_0^{2\pi} e^{\kappa\cos\theta}\,d\theta.
\]
We also recall that
\[
I_1(\kappa)
=
\frac{1}{2\pi}\int_0^{2\pi} e^{\kappa\cos\theta}\cos\theta\,d\theta.
\]
The parameter $\kappa$ controls the concentration: for $\kappa=0$ the
distribution is uniform, while for $\kappa\to\infty$ it concentrates
near $n_0$. Introduce the boost parameter
\[
\eta:=at, \qquad t=\frac{\eta}{a}.
\]
With this notation the four--velocity of the accelerating
observer from our model is given by
$u_{\eta/a} = \cosh\eta\,A+\sinh\eta\,n$,  see \eqref{eq-fourvel}.

Define the {\it Hamiltonian} as the photon energy\footnote{To keep the units consistent, the expression on the right-hand side of \eqref{eq-Heta} should be multiplied by a constant, say $\gamma$, with units of energy. For simplicity, we tacitly assume $\gamma=1$.}
\begin{equation}\label{eq-Heta}
H_\eta(n,q)
:=
\varepsilon_{u_{\eta/a}}(q) =
\cosh\eta-\sinh\eta\langle n,\hat q\rangle.
\end{equation}

For a future use, recall that $\langle A,\hat q\rangle=-1$.
Write $A=(1,0,0)$ and $\hat q=(\alpha,\hat q_{\mathrm{sp}})$, where
$\hat q_{\mathrm{sp}}$ denotes the spatial component of $\hat q$.
It follows that $\alpha=1$ and $|\hat q_{\mathrm{sp}}|^2=1$.

Define the partition function
\[
Z_\eta(q)
:=
\int_{S^1} e^{-\beta H_\eta(n,q)}\,d\mu_\kappa(n),
\]
where $\beta$ is the inverse temperature.
Substituting \eqref{eq-Heta}, we obtain
\[
Z_\eta(q)
=
\frac{e^{-\beta\cosh\eta}}{2\pi I_0(\kappa)}
\int_0^{2\pi}
\exp\Big(
\kappa\langle n,n_0\rangle
+
\beta\sinh\eta\,\langle n,\hat q\rangle
\Big)\,d\phi.
\]

A direct computation yields
\begin{equation}\label{eq-Zeta}
Z_\eta(q)
=
e^{-\beta\cosh\eta}
\frac{I_0(R_\eta(q))}{I_0(\kappa)},
\end{equation}
where
\begin{equation}\label{eq-Reta}
R_\eta(q)
=
\sqrt{
\kappa^2
+
\beta^2\sinh^2\eta
+
2\kappa\beta\sinh\eta\langle n_0,\hat q\rangle
}.
\end{equation}

Indeed,
\[
\kappa\langle n,n_0\rangle+\beta\sinh \eta\langle n,\hat q\rangle
=
\langle n,v_\eta(q)\rangle,
\]
where
\[
v_\eta(q):=\kappa n_0+\beta\sinh \eta \,\hat q_{\mathrm{sp}}
\]
and \(\hat q_{\mathrm{sp}}\), as above, denotes the spatial component of \(\hat q\).
Hence
\[
|v_\eta(q)|
=
\sqrt{
\kappa^2+\beta^2\sinh^2\eta+2\kappa\beta\sinh\eta\langle n_0,\hat q\rangle
}
=
R_\eta(q).
\]

Writing \(v_\eta(q)=R_\eta(q)\,m_\eta(q)\) with \(m_\eta(q)\in S^1\), we obtain
\[
\kappa\langle n,n_0\rangle+\beta\sinh\eta\langle n,\hat q\rangle
=
R_\eta(q)\langle n,m_\eta(q)\rangle.
\]
Therefore
\[
\int_{S^1}
\exp\Big(
\kappa\langle n,n_0\rangle+\beta\sinh\eta\langle n,\hat q\rangle
\Big)\,d\phi
=
\int_{S^1} e^{R_\eta(q)\langle n,m_\eta(q)\rangle}\,d\phi.
\]

Since the measure \(d\phi\) is invariant under rotations of \(S^1\), the
integral depends only on \(R_\eta(q)\) and not on the direction \(m_\eta(q)\).
Thus we can rotate coordinates so that \(m_\eta(q)=(1,0)\), and obtain
\[
\int_{S^1} e^{R_\eta(q)\langle n,m_\eta(q)\rangle}\,d\phi
=
\int_0^{2\pi} e^{R_\eta(q)\cos\phi}\,d\phi
=
2\pi I_0(R_\eta(q)).
\]

Substituting this into the definition of \(Z_\eta(q)\) yields
\[
Z_\eta(q)
=
e^{-\beta\cosh\eta}\frac{I_0(R_\eta(q))}{I_0(\kappa)}.
\]
Equation \eqref{eq-Zeta} follows.

Define the {\it reduced free energy} by
\[
F_\eta(q):=-\log Z_\eta(q).
\]
Using \eqref{eq-Zeta}, we obtain
\begin{equation}\label{eq-Feta}
F_\eta(q)
=
\beta\cosh\eta
-
\log I_0(R_\eta(q))
+
\log I_0(\kappa).
\end{equation}

\begin{theorem}\label{thm-free-eta}
Assume that the spatial acceleration direction is distributed according
to the von Mises law with parameter $\kappa>0$. Then, as $\eta\to0$,
\[
F_\eta(q)
=
\beta
-
\beta\,\frac{I_1(\kappa)}{I_0(\kappa)}
\,\eta\,\langle n_0,\hat q\rangle
+
O(\eta^2).
\]
\end{theorem}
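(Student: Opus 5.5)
The plan is to expand each term of the closed formula \eqref{eq-Feta} to first order in $\eta$ around $\eta=0$. We treat $\beta$, $\kappa>0$ and $q$ as fixed, so every coefficient is uniform in $q$ because $|\langle n_0,\hat q\rangle|\le 1$.

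The first term is $\beta\cosh\eta=\beta+O(\eta^2)$. The constant $\log I_0(\kappa)$ is carried along unchanged. Everything therefore reduces to expanding $\log I_0(R_\eta(q))$.

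For $R_\eta(q)$ we use \eqref{eq-Reta}. Since $\sinh\eta=\eta+O(\eta^3)$, the radicand equals
\[
\kappa^2+2\kappa\beta\eta\langle n_0,\hat q\rangle+O(\eta^2).
\]
Because $\kappa>0$, the square root is smooth near the point $\kappa^2$. This is exactly where the hypothesis $\kappa>0$ is needed: for $\kappa=0$ one has $R_\eta=\beta|\sinh\eta|$, which is not differentiable at $\eta=0$ and has no linear term. Taylor expanding the square root gives
\[
R_\eta(q)=\kappa+\beta\eta\langle n_0,\hat q\rangle+O(\eta^2).
\]

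Next we compose with the smooth function $\log I_0$ near $\kappa$, using the standard identity $I_0'=I_1$. This identity can also be read off directly from the integral definitions in the paper by differentiating under the integral sign. It yields
\[
\log I_0(R_\eta(q))=\log I_0(\kappa)+\frac{I_1(\kappa)}{I_0(\kappa)}\,\beta\eta\langle n_0,\hat q\rangle+O(\eta^2).
\]
Substituting into \eqref{eq-Feta}, the two $\log I_0(\kappa)$ terms cancel and we obtain the claimed expansion
\[
F_\eta(q)=\beta-\beta\,\frac{I_1(\kappa)}{I_0(\kappa)}\,\eta\,\langle n_0,\hat q\rangle+O(\eta^2).
\]

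The only genuine subtlety is justifying the smooth expansion of $R_\eta$, which relies on $\kappa>0$ so that the square root stays away from zero. Everything else is routine Taylor expansion.
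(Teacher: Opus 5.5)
Your proposal is correct and follows the paper's proof essentially verbatim. You expand $R_\eta(q)=\kappa+\beta\eta\langle n_0,\hat q\rangle+O(\eta^2)$ using $\kappa>0$, expand $\log I_0$ at $\kappa$ via $I_0'=I_1$, and substitute into \eqref{eq-Feta}, where $\beta\cosh\eta=\beta+O(\eta^2)$ and the $\log I_0(\kappa)$ terms cancel. One small aside: $\kappa>0$ is needed for this way of expanding, but the conclusion itself also holds at $\kappa=0$, because $\log I_0$ is smooth and even, so $\log I_0(\beta|\sinh\eta|)=O(\eta^2)$, consistent with $I_1(0)=0$.
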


\begin{proof}
Set
\[
s_\eta:=\beta\sinh\eta =
\beta\eta+O(\eta^3).
\]
Then
\[
R_\eta(q)
=
\sqrt{\kappa^2+s_\eta^2+2\kappa s_\eta\langle n_0,\hat q\rangle}
=
\kappa+s_\eta\langle n_0,\hat q\rangle+O(\eta^2).
\]
Expanding $\log I_0$ at $\kappa$, we get (by using that $I'_0=I_1$) that
\[
\log I_0(R_\eta(q))
=
\log I_0(\kappa)
+
\frac{I_1(\kappa)}{I_0(\kappa)}\,s_\eta\langle n_0,\hat q\rangle
+
O(\eta^2).
\]

Substituting into \eqref{eq-Feta} yields
\begin{equation}\label{eq-fetaq}
F_\eta(q)
=
\beta
-
\beta\frac{I_1(\kappa)}{I_0(\kappa)}
\,\eta\,\langle n_0,\hat q\rangle
+
O(\eta^2).
\end{equation}
This completes the proof.
\end{proof}

Recall that
\[
E_{\eta/a}(q)
=
\cosh\eta-\sinh\eta\langle n_0,\hat q\rangle = 1- \eta \langle n_0,\hat q\rangle + O(\eta^2).
\]

\begin{corollary}\label{cor-concentration-eta}
Put $\beta=1/(2a)$ (see \eqref{eq-beta-eff}).
As $\kappa\to+\infty$, $\eta\to0$,
\begin{equation}\label{eq-Eeta}
F_\eta(q)
=
\beta\,E_{\eta/a}(q)
+
O\big(\kappa^{-1}\eta+\eta^2\big).
\end{equation}
Thus, if $\kappa^{-1}=O(\eta)$,
the rescaled generating function $f_\eta(q):= \eta^{-1}g_{\eta/a}(q)$
satisfies
$$f_\eta(q) = C(\eta) + F_\eta(q) + O(\eta^2)\;.$$
\end{corollary}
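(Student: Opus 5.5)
The plan is to combine Theorem \ref{thm-free-eta} with the large-$\kappa$ asymptotics of the Bessel ratio, and then compare the result with the expansion \eqref{eq-fe-affine} of $f_\eta$.

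\textbf{Step 1: large-$\kappa$ Bessel ratio.} I will use the standard expansion
\[
\frac{I_1(\kappa)}{I_0(\kappa)}=1-\frac{1}{2\kappa}+O(\kappa^{-2}),\qquad \kappa\to+\infty.
\]
It follows from Laplace's method applied to the integrals defining $I_0$ and $I_1$. In fact only $1-I_1/I_0=O(\kappa^{-1})$ is needed, and this also follows from the fact that $1-\cos\theta\approx\theta^2/2$ has mean of order $\kappa^{-1}$ under the von Mises law.

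\textbf{Step 2: expansion of $F_\eta$.} Substituting Step 1 into Theorem \ref{thm-free-eta} gives
\[
F_\eta(q)=\beta-\beta\eta\langle n_0,\hat q\rangle+O(\kappa^{-1}\eta)+O(\eta^2).
\]
Here $\beta=1/(2a)$ is fixed, and $|\langle n_0,\hat q\rangle|\le1$ by the normalization $\hat q=(1,\hat q_{\mathrm{sp}})$ with $|\hat q_{\mathrm{sp}}|=1$.

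\textbf{Step 3: uniformity of the error.} The main obstacle is that the $O(\eta^2)$ term in Theorem \ref{thm-free-eta} must be uniform in $\kappa$ as $\kappa\to\infty$. I would check this by writing $\log I_0(R)-\log I_0(\kappa)$ through the mean value theorem. The second derivative $(\log I_0)''=1-(I_1/I_0)/R-(I_1/I_0)^2$ is bounded, in fact $O(1/R)$, for $R\ge\kappa/2$. Also
\[
R_\eta-\kappa-s_\eta\langle n_0,\hat q\rangle=O(s_\eta^2/\kappa),
\]
which is uniform in $\kappa$. Together these give a remainder $O(\eta^2)$ uniformly in $\kappa$.

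\textbf{Step 4: comparison with $\beta E_{\eta/a}(q)$.} Since $E_{\eta/a}(q)=1-\eta\langle n_0,\hat q\rangle+O(\eta^2)$ (taking $n=n_0$), we get
\[
\beta E_{\eta/a}(q)=\beta-\beta\eta\langle n_0,\hat q\rangle+O(\eta^2).
\]
Subtracting this from Step 2 yields \eqref{eq-Eeta}.

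\textbf{Step 5: the claim about $f_\eta$.} I take \eqref{eq-fe-affine} with $n=n_0$ and $\beta=1/(2a)$:
\[
f_\eta(q)=C(\eta)+\beta E_{\eta/a}(q)+O(\eta^2).
\]
Under $\kappa^{-1}=O(\eta)$, the error in \eqref{eq-Eeta} is $O(\eta^2)$. Replacing $\beta E_{\eta/a}(q)$ by $F_\eta(q)$ therefore gives $f_\eta(q)=C(\eta)+F_\eta(q)+O(\eta^2)$.

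A possible subtlety is the $O(\eta^2)$ in \eqref{eq-f-forent}. It arises from an $O(t^3)$ term divided by $\eta$, and that term carries factors of $a^{-k}$. I treat $a$ as fixed, so this is consistent with the paper's conventions.
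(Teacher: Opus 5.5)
Your proposal is correct and follows the same route as the paper, which obtains the corollary directly from Theorem \ref{thm-free-eta}, the asymptotics $I_1(\kappa)/I_0(\kappa)=1+O(1/\kappa)$, the small-$\eta$ expansion of $E_{\eta/a}(q)$, and \eqref{eq-fe-affine}. Your Step 3 is an addition the paper's one-line proof leaves implicit: it checks that the $O(\eta^2)$ remainder in Theorem \ref{thm-free-eta} is uniform in $\kappa$, using the bounded second derivative of $\log I_0$ together with $R_\eta-\kappa-s_\eta\langle n_0,\hat q\rangle=O(s_\eta^2/\kappa)$.
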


\begin{proof}
The corollary immediately follows from \eqref{eq-Et-small}, \eqref{eq-fetaq}, \eqref{eq-fe-affine}, \eqref{eq-deff}, and the expansion
$$\frac{I_1(\kappa)}{I_0(\kappa)} = 1 + O(1/\kappa), \; \kappa \to +\infty\;.$$
\end{proof}

\subsection{Comparison with the classical rotor}\label{subsec-rot}

Consider a classical rotor whose configuration space is the unit circle
$S^1$ (cf. \cite{DS}). Let $u=(\cos\theta,\sin\theta)$ be a fixed unit vector representing the
direction of an external field. The Hamiltonian of the rotor is
\[
H_{\rm rot}(\phi)=-h\,n(\phi) \cdot u,
\qquad
n(\phi)=(\cos\phi,\sin\phi).
\]
Here dot stands for the scalar product in $\mathbb{R}^2$.
The constant $h$, with units of energy, will be specified later.

At the inverse temperature $\beta>0$ the partition function is
\[
Z_{\rm rot}(h,\beta)
=
\int_{S^1}
e^{-\beta H_{\rm rot}(\phi)}\frac{d\phi}{2\pi}
=
\int_0^{2\pi}
e^{(\beta h)\cos(\phi-\theta)}\frac{d\phi}{2\pi}
=
I_0\!\left(\beta h\right),
\]
where $I_0$ is the modified Bessel function. The corresponding reduced free energy is
\[
F_{\rm rot}(h,\beta)=-\log Z_{\rm rot}(h,\beta).
\]

We consider the small--field regime with $T$ fixed. Using the standard expansion
\[
I_0(x)=1+\frac{x^2}{4}+O(x^4),
\]
we obtain
\[
F_{\rm rot}(h,\beta)
=
-\frac{\beta^2 h^2}{4} +O(h^4).
\]

Recall (see Theorem \ref{thm-free-eta}) that the photon free energy is given by
\[
F_\eta(q)
=
\beta
-
\beta\,\frac{I_1(\kappa)}{I_0(\kappa)}\,\eta\,\langle n_0,\hat q\rangle
+
O(\eta^2).
\]

Assume that
\[
\langle n_0,\hat q\rangle>0\;.
\]
Choose the field strength
\[
h
=
2\sqrt{\beta^{-1}
\frac{I_1(\kappa)}{I_0(\kappa)} \eta
\langle n_0,\hat q\rangle
}.
\]
Substituting this into the rotor free energy gives
\begin{equation}
F_\eta(q)
=
\beta + F_{\rm rot}(h,\beta)
+
O(\eta^2).
\end{equation}

Hence, after an appropriate scaling of the field strength, the free energy
of the classical rotor captures the leading directional dependence of the
photon free energy, up to a $q$--independent term. Note that the field strength
$h$ depends on $q$; thus we are not comparing with a single fixed rotor, but rather with a family of rotor models whose field is tuned to match the directional dependence.

\section{Hodograph transform-miscellanea}

\subsection{Hyperbolic hodograph-proofs} \label{sec-Appendix-A}

\medskip
\noindent
{\bf Proof of Proposition \ref{prop-hod}:}
We use coordinates $(u,v)$ on $S^*\mathbb{H}$, with $u \in \mathbb{H}$
and $v$ being a unit covector at $u$. Recall that $\lambda = v\,du$.
Write $b_q(u)$ as $b(q,u)$. The hodograph map has the form
$\mathcal{H}(q,p,z) = (U,V)$ with $V= \frac{\partial b}{\partial u}(q,U)$.
Since $z= b(q,U)$ and $p=  \frac{\partial b}{\partial q}(q,U)$, we get
\[
dz= p\,dq + V\,dU,
\]
so that $V\,dU = dz-p\,dq$. This proves the proposition.
$\Box$

\medskip

Next, let us prove that $\mathcal{H}$ is one-to-one and onto. To this end we work
in the unit disc model of the hyperbolic plane.
Let $\mathbb D=\{u\in\mathbb C:\ |u|<1\}$ be the Poincar\'e disk with metric
\[
g_u=\frac{4}{(1-|u|^2)^2}\,|du|^2.
\]
We identify $K$ with the Poincaré disk via the standard radial projection
from the origin onto the plane $x_0=1$.
Note that under the identification of $K$ with $\mathbb D$, the ideal boundary $\partial_\infty K$ corresponds to the boundary $S^1=\partial \mathbb D$.
Without loss of generality, the marked point
in the definition of the Busemann function is the origin.
For $q\in S^1$  the  Busemann function is given by
\[
b_q(u):=\log\frac{1-|u|^2}{|u-q|^2}, \qquad u\in\mathbb D.
\]

\begin{proposition} \label{prop:hodograph-diffeo}
Let $S^*\mathbb D=\{(u,\xi):\ \|\xi\|_{g^*_u}=1\}$ be the unit cosphere bundle.
Define the map
\[
\Theta:S^*\mathbb D\longrightarrow J^1S^1, \qquad
\Theta(u,\xi)=(q,p,z),
\]
as follows: $q=q(u,\xi)\in S^1$ is the unique point such that
\begin{equation}\label{eq:xi-is-db}
\xi=d_u b_q,
\end{equation}
and then
\begin{equation}\label{eq:def-zp}
z=b_q(u),\qquad p=\frac{\partial}{\partial\phi}\,b_{e^{i\phi}}(u)\Big|_{e^{i\phi}=q},
\end{equation}
where $\phi$ is an angular coordinate on $S^1$ (hence $p$ is the covector component
in the canonical trivialization $T^*S^1\cong S^1\times\mathbb R$).
Then $\Theta$ is a diffeomorphism whose inverse is the hodograph map $\mathcal H$.
\end{proposition}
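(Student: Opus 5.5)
The plan is to compute $d_u b_q$ explicitly in the disk model and show that, for fixed $u$, the map $q\mapsto d_u b_q$ is a diffeomorphism from $S^1$ onto the unit cosphere $S^*_u\mathbb D$. This makes $q(u,\xi)$ in \eqref{eq:xi-is-db} well defined and smooth, so $\Theta$ is a smooth map. One then checks that $\Theta$ and $\mathcal H$ are mutually inverse and that $\Theta$ is surjective.

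\textbf{Step 1.} Differentiating $b_q(u)=\log(1-|u|^2)-\log|u-q|^2$ in the real coordinates of $u$ gives
\[
\nabla^{\mathrm{eucl}} b_q(u) = -\frac{2u}{1-|u|^2}-\frac{2(u-q)}{|u-q|^2}.
\]
A short computation gives $\|d_ub_q\|_{g^*_u}=\frac{1-|u|^2}{2}\,|\nabla^{\mathrm{eucl}}b_q|=1$. This also follows conceptually, since Busemann functions have unit gradient.

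\textbf{Step 2 (the main point).} We show that for fixed $u$ the map $q\mapsto d_ub_q\in S^*_u\mathbb D\cong S^1$ is a diffeomorphism. First reduce to $u=0$ using an isometry $\psi$ of $\mathbb D$ with $\psi(0)=u$. Such an isometry extends to a diffeomorphism of $\partial\mathbb D$. It also pulls back Busemann functions to Busemann functions up to additive constants, because the Busemann function is defined intrinsically up to a constant and the constant does not affect $d_ub_q$. Hence $d_u b_q=(\psi^{-1})^*d_0 b_{\psi^{-1}(q)}$. At $u=0$ the formula gives $\nabla b_q(0)=-2q$, which is the antipodal map and so clearly a diffeomorphism. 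Composing with the boundary diffeomorphism $\psi^{-1}$ proves the claim for all $u$. This step gives existence and uniqueness of $q(u,\xi)$. Smoothness in $(u,\xi)$ follows from the implicit function theorem, because the map $(u,q)\mapsto (u,d_ub_q)$ is a smooth bijection $\mathbb D\times S^1\to S^*\mathbb D$ whose derivative is invertible: the base component is the identity and the fiber component has nonzero derivative in $q$, being a diffeomorphism of circles.

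\textbf{Step 3.} Define $\Theta$ by \eqref{eq:def-zp}; it is smooth. To check that $\mathcal H\circ\Theta=\mathrm{id}$, start from $(u,\xi)$ and put $(q,p,z)=\Theta(u,\xi)$. The point $u$ then satisfies $b_q(u)=z$ and $\partial_\phi b_q(u)=p$, and $\mathcal H$ returns $(u,d_ub_q)=(u,\xi)$. This requires $\mathcal H$ to be well defined, that is, for each $(q,p,z)$ there must be a \emph{unique} $u$ with $b_q(u)=z$ and $\partial_\phi b_q(u)=p$.

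\textbf{Step 4 (well-definedness of $\mathcal H$).} Normalize $q=1$ by a rotation. The level set $\{b_q=z\}$ is a horocycle tangent at $q$, which is a Euclidean circle internally tangent to $\partial\mathbb D$ at $q$. Parametrize it by $u(s)=$ the point at signed horocyclic arclength $s$. I expect $\partial_\phi b_{e^{i\phi}}(u(s))$ to be a strictly monotone function of $s$ with range $\mathbb R$. For $q=1$ one can verify this with the upper half-plane model, sending $q$ to $\infty$. There the horocycles are horizontal lines $\{y=e^{-z}\}$ (up to the normalization constant), and $\partial_\phi b$ becomes an affine function of $x$ with nonzero slope, by the direct computation of the derivative of $\log\bigl(y/((x-x_0)^2+y^2)\bigr)$ with respect to the boundary point $x_0$ at $x_0\to\infty$ after the Cayley transform. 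Given this, $u$ is unique, so $\mathcal H$ is a bijection with inverse $\Theta$. Since both maps are smooth, $\Theta$ is a diffeomorphism. Its inverse is also a contactomorphism by Proposition~\ref{prop-hod}.

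The main obstacle is the computation in Step 4: verifying carefully that the $\phi$-derivative of the boundary point, restricted to a horocycle, is a global affine coordinate along it. I would do this in the half-plane model with $q$ at infinity, tracking the Jacobian of the Cayley transform on the boundary.
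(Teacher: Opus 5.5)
Your plan is sound, and once Step 4 is actually carried out it gives a complete proof. It differs from the paper's mainly in how $u$ is recovered from $(q,p,z)$.

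The paper stays in the disk. After rotating to $q=1$ it writes $e^z=(1-r^2)/d$ and $p=2y/d$ with $d=|w-1|^2$. It then eliminates $x,y,r$ to find the single positive root $d=4/\bigl((1+a)^2+p^2\bigr)$, $a=e^z$, and obtains the closed formula $w=\bigl(a^2+p^2-1+2ip\bigr)/\bigl((1+a)^2+p^2\bigr)$ with a direct check that $|w|<1$. The fiber statement, that $q\mapsto d_ub_q$ is a bijection onto $S^*_u\mathbb D$, is only asserted geometrically there, and joint smoothness of $q(u,\xi)$ is not spelled out. You prove that fiber statement properly (isometry to $u=0$, then the implicit function theorem), which is a genuine improvement.

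For Step 4, which you leave as an expectation, you do not need to move the boundary point to $\infty$ or track boundary Jacobians. With $\zeta=i\frac{1+w}{1-w}$ one has identically
\[
\operatorname{Im}\zeta=\frac{1-|w|^2}{|1-w|^2}=e^{b_1(w)},
\qquad
\operatorname{Re}\zeta=-\frac{2\operatorname{Im}w}{|1-w|^2}=-\frac{\partial}{\partial\phi}\,b_{e^{i\phi}}(w)\Big|_{\phi=0},
\]
where the last equality is the paper's computation $p=2y/d$. Hence $(p,z)\mapsto\zeta=-p+ie^z$ is a diffeomorphism of $\mathbb R^2$ onto the upper half-plane. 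So $u=q\,(\zeta-i)/(\zeta+i)$ exists, is unique, and depends smoothly on $(q,p,z)$. This also supplies the smoothness of $\mathcal H$ that you invoke at the end without justification, and it reproduces exactly the paper's formula for $w$.

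Your route buys a conceptual explanation: in half-plane coordinates sending $q$ to $\infty$, the jet coordinates $(-p,e^z)$ are just the Cartesian coordinates, so $\mathcal H$ is essentially a Cayley transform. The paper's route buys a fully explicit formula with no change of model.

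One minor point: with the paper's convention $b_q(u)=\log\frac{1-|u|^2}{|u-q|^2}$, your own gradient formula gives $\nabla b_q(0)=+2q$, not $-2q$. Likewise the horocycles are $\{\operatorname{Im}\zeta=e^{z}\}$ rather than $\{y=e^{-z}\}$. Neither affects the argument.
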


\begin{proof}
\textbf{Step 1: $\Theta$ is well-defined.}
We first record two standard facts about Busemann functions on $(\mathbb D,g)$:

\smallskip
\emph{(a) Unit speed.} For every $q\in S^1$, the Busemann function $b_q$ satisfies the eikonal equation
\begin{equation}\label{eq:eikonal}
\|d_u b_q\|_{g^*_u}=1,\qquad \forall u\in\mathbb D.
\end{equation}
(Equivalently, $\|\nabla_g b_q\|_g=1$.)

\smallskip
\emph{(b) Uniqueness of $q$ from $(u,\xi)$.}
Fix $u\in\mathbb D$. The map
\[
S^1\ni q\longmapsto d_u b_q\in S^*_u\mathbb D
\]
is a diffeomorphism. Geometrically, $d_u b_q$ is the unit covector pointing in the
direction of the (unique) geodesic ray from $u$ to $q$; distinct $q$ give distinct
forward directions, and every forward direction determines a unique ideal endpoint $q$.
Hence for each $(u,\xi)\in S^*\mathbb D$ there exists a unique $q$ satisfying \eqref{eq:xi-is-db},
and then \eqref{eq:def-zp} defines $(p,z)$ uniquely. Thus $\Theta$ is well-defined.

\medskip
\textbf{Step 2: An explicit inverse map $J^1S^1\to S^*\mathbb D$.}
Write $q=e^{i\phi}\in S^1$ and set $a:=e^z>0$.
We shall reconstruct $u$ from $(q,p,z)$ and then set $\xi=d_u b_q$.

By rotational symmetry, it suffices to solve the reconstruction for $q=1$, and then rotate back.
Let $R_q:\mathbb D\to\mathbb D$ be the rotation $R_q(w)=qw$.
Since $b_{q}(u)=b_1(\bar q u)$ (because $|u-q|=|\bar q u-1|$ and $|u|=|\bar q u|$),
if we set $w=\bar q u$ then
\[
z=b_q(u)=b_1(w),
\qquad
p=\frac{\partial}{\partial\phi}b_{e^{i\phi}}(u)\Big|_{e^{i\phi}=q}
=
\frac{\partial}{\partial\phi}b_{e^{i\phi}}(qw)\Big|_{\phi=\arg q}
=
\frac{\partial}{\partial\phi}b_{e^{i\phi}}(w)\Big|_{\phi=0}.
\]
Thus, without loss of generality, we may assume that $q=1$ and solve the equation
\[
\Theta(w,\xi)=(1,p,z).
\]
Once $w$ is determined, the covector is uniquely given by $\xi = d_w b_1$.
Write $w=x+iy$.
Set
\[
d:=|w-1|^2=(x-1)^2+y^2,\qquad r^2:=|w|^2=x^2+y^2.
\]
From the definitions,
\begin{equation}\label{eq:system-ap}
a=e^{z}=\frac{1-r^2}{d},
\qquad
p=\frac{\partial}{\partial\phi}b_{e^{i\phi}}(w)\Big|_{\phi=0}
=\frac{2y}{d}.
\end{equation}
(The last identity is a direct differentiation of $b_{e^{i\phi}}(w)=\log\frac{1-r^2}{|w-e^{i\phi}|^2}$
at $\phi=0$.)

From \eqref{eq:system-ap} we express $y=\frac{pd}{2}$ and $r^2=1-ad$.
Also $d=|w-1|^2=r^2-2x+1$, hence
\begin{equation}\label{eq:x-from-d}
x=1-\frac{(1+a)d}{2}.
\end{equation}
Now $r^2=x^2+y^2$ together with $r^2=1-ad$, $y=\frac{pd}{2}$, and \eqref{eq:x-from-d}
gives a single positive solution
\begin{equation}\label{eq:d-solution}
d=\frac{4}{(1+a)^2+p^2},
\end{equation}
and then
\begin{equation}\label{eq:w-solution}
x=\frac{a^2+p^2-1}{(1+a)^2+p^2},
\qquad
y=\frac{2p}{(1+a)^2+p^2}.
\end{equation}
Equivalently,
\begin{equation}\label{eq:w-closed}
w(a,p)=\frac{a^2+p^2-1}{(1+a)^2+p^2}
+i\,\frac{2p}{(1+a)^2+p^2}.
\end{equation}
Finally, define for general $q\in S^1$:
\begin{equation}\label{eq:inverse-u}
u(q,p,z):=q\,w(e^z,p).
\end{equation}
This yields a smooth map
\[
\mathcal G:J^1S^1\to \mathbb D,\qquad \mathcal G(q,p,z)=u(q,p,z).
\]
Moreover, from \eqref{eq:w-closed} one checks
\begin{equation}\label{eq:w-in-disc}
|w(a,p)|^2=\frac{(a-1)^2+p^2}{(a+1)^2+p^2}<1,
\end{equation}
so indeed $u(q,p,z)\in\mathbb D$ for all $(q,p,z)\in J^1S^1$.

Now set
\begin{equation}\label{eq:inverse-xi}
\mathcal H(q,p,z):=\bigl(u(q,p,z),\, d_{u(q,p,z)} b_q\bigr)\ \in\ S^*\mathbb D.
\end{equation}
The membership in $S^*\mathbb D$ follows from \eqref{eq:eikonal}.

\medskip
\textbf{Step 3: $\Theta$ and $\mathcal H$ are inverse to each other.}
We show $\Theta\circ \mathcal H=\mathrm{id}$ and $\mathcal H\circ \Theta=\mathrm{id}$.

\smallskip
\emph{(i) $\Theta\circ \mathcal H=\mathrm{id}_{J^1S^1}$.}
Take $(q,p,z)\in J^1S^1$ and set $(u,\xi)=\mathcal H(q,p,z)$.
By construction $\xi=d_u b_q$, so the first component of $\Theta(u,\xi)$ is exactly $q$.
It remains to check that the resulting $(p',z')$ coincide with $(p,z)$.
But $u$ was obtained by solving the system \eqref{eq:system-ap} with $a=e^z$ and $p$,
hence for $q=1$ we have $b_1(w)=z$ and $\partial_\phi b_{e^{i\phi}}(w)|_{\phi=0}=p$.
Rotating back (using $w=\bar q u$) yields
\[
b_q(u)=z,\qquad \frac{\partial}{\partial\phi}b_{e^{i\phi}}(u)\Big|_{e^{i\phi}=q}=p.
\]
Therefore $\Theta(u,\xi)=(q,p,z)$.

\smallskip
\emph{(ii) $\mathcal H\circ \Theta=\mathrm{id}_{S^*\mathbb D}$.}
Take $(u,\xi)\in S^*\mathbb D$ and let $(q,p,z)=\Theta(u,\xi)$.
By definition, $\xi=d_u b_q$. Applying $\mathcal H$ produces
\[
\mathcal H(q,p,z)=\bigl(u(q,p,z),\, d_{u(q,p,z)} b_q\bigr).
\]
From part (i) we know that $\Theta(\mathcal H(q,p,z))=(q,p,z)$, so in particular the $u$--component
is uniquely determined by $(q,p,z)$. Since $(u,\xi)$ and $\mathcal H(q,p,z)$ have the same
$(q,p,z)$ under $\Theta$, their $u$--components coincide: $u(q,p,z)=u$.
Consequently $d_{u(q,p,z)}b_q=d_u b_q=\xi$, and hence $\mathcal H\circ\Theta=\mathrm{id}$.

Thus $\Theta$ is bijective with inverse $\mathcal H$.
\end{proof}

\subsection{Euclidean hodograph transform and Busemann functions} \label{sec-Appendix-B}

The classical Euclidean hodograph transform admits a natural interpretation
in terms of Busemann functions. Let $q\in S^1$ be a direction in
$\mathbb{R}^2$ and consider the ray $\gamma_q(t)=tq$, $t\ge0$. The associated
Busemann function is defined by
\[
b_q(x)=\lim_{t\to\infty}\big(d(x,\gamma_q(t))-t\big).
\]
A direct computation shows that in Euclidean space
\[
b_q(x)=-\langle x,q\rangle .
\]

Thus $b_q$ is a linear function whose gradient equals $-q$.
Using these functions one can define a map
\[
(x,q)\longmapsto j^1_q(b_q(x))
\]
from $\mathbb{R}^n\times S^1$ to the $1$--jet bundle $J^1S^1$.
Writing
\[
z=b_q(x), \qquad p=\partial_q b_q(x),
\]
this map takes the form
\[
(x,q)\longmapsto (q,p,z).
\]

One readily checks that this map coincides with the classical Euclidean
hodograph transform. Indeed, writing $q=(\cos\varphi,\sin\varphi)$ and
$x=(x_1,x_2)$, we have
\[
b_q(x)=-(x_1\cos\varphi+x_2\sin\varphi).
\]
Hence
\[
z=b_q(x)=-(x_1\cos\varphi+x_2\sin\varphi),
\]
and
\[
p=\partial_\varphi b_q(x)
= x_1\sin\varphi-x_2\cos\varphi .
\]
This, up to a sign convention, coincides with formula (6.2) in \cite{CN}.

This interpretation shows that the hodograph transform arises from the
family of Busemann functions associated with directions $q\in S^1$.
In particular, the Euclidean hodograph transform should be viewed as the
flat counterpart of the hyperbolic hodograph transform used above.
Indeed, replacing the Euclidean distance by the hyperbolic distance in
the definition of $b_q$ produces the hyperbolic Busemann functions, and
the same jet construction yields the hyperbolic hodograph map.

\section{Conclusion}

We propose a duality between the Lorentzian geometry of light rays and thermodynamics, based on the hyperbolic hodograph transform. The hodograph transform is a contactomorphism between the spherical cotangent bundle of the Minkowski hyperboloid in $\mathbb{R}^{1,2}$ and the jet space of the circle. The hyperboloid is a Cauchy hypersurface of the chronological future of the origin in Minkowski spacetime, while the jet space is viewed as a thermodynamic phase space. The intensive variables (points of the circle) correspond to future light directions in $\mathbb{R}^{1,2}$, and the thermodynamic potential (the $\mathbb{R}$-coordinate on $J^1S^1=\mathbb{R}\times T^*S^1$) corresponds to a reduced free energy.

We follow the trajectory of an observer starting on $K$ and undergoing uniform acceleration
with proper acceleration $a$. We track the evolution of its skies and consider their
(suitably rescaled) images under the hodograph map. The corresponding Legendrians are
given by generating functions, which we interpret as reduced free energies.
This yields a duality, some facets of which are summarized in the following table.

\bigskip

\begin{tabular}{p{0.48\textwidth} p{0.48\textwidth}}
\textbf{Thermodynamics} & \textbf{Relativity} \\ \hline
\textbf{1.} Legendrian submanifold generated by the reduced free energy
(thermodynamic equilibrium)
& Legendrian sky of an event \\
\textbf{2.} Internal energy $E$
& Photon energy measured in direction $q$
by two observers -- Doppler effect (Busemann function)  \\
\textbf{3.} Temperature $2a$  & Unruh temperature $a/2\pi$ \\
\textbf{4.} Entropy $S$ & Dimensionless boost parameter $at$ (Rindler time) \\
\textbf{5.} Reeb chord between Legendrians in \textbf{1} (ultrafast process)
& Light ray between events  \\
\textbf{6.} Quasistatic thermodynamic process & Deformation of skies along causal paths\\
\end{tabular}

\bigskip

Some comments are in order. The first two lines summarize the main constructions of this
paper, see \eqref{eq-fe-affine} and \eqref{eq-phen}, respectively. The third line is explained
in Section \ref{subsec-unruh}.

Line 4 deserves a special discussion. Recall that in equilibrium thermodynamics,
if $f=\beta F$ denotes the reduced free energy, then the entropy is given by
\[
s=\beta \frac{\partial f}{\partial \beta}-f.
\]

Using the unit-consistent form \eqref{eq-f-forent-units},
\[
f_\eta(q)
=
-\frac{3}{2a}
+\frac{\eta}{2a^2}
+\frac{1}{2a\gamma}\,\mathcal E_\eta(q)
+O(\eta^2),
\]
we rewrite it in thermodynamic form as
\[
f_\eta(q)
=
-3\gamma\beta
+
2\eta(\gamma\beta)^2
+
\beta\,\mathcal E_\eta(q)
+
O(\eta^2),
\]
where $\eta$ is viewed as a parameter and
\[
\beta=\frac{1}{2a\gamma}
\]
is the effective inverse temperature.

A direct computation yields
\[
s=2\eta(\gamma\beta)^2.
\]
Substituting $\beta=(2a\gamma)^{-1}$, we obtain
\[
s=\frac{\eta}{2a^2}.
\]

Thus the entropy is proportional to the boost parameter $\eta$ (often called the Rindler time).
This is natural, since entropy and time encode the causal structures
in thermodynamics and relativity, respectively (see e.g.\ \cite{HP}).

Finally, lines 5 and 6 allude to the classification of thermodynamic processes within
the contact-topological framework proposed in \cite{EPR}.

As a final remark, all our results extend to higher dimensions.
The corresponding hodograph map is a contactomorphism between the spherical cotangent bundle
$S^*\mathbb{H}^n$ of the $n$-dimensional hyperbolic space and the jet bundle of the
$(n-1)$-dimensional sphere, $J^1S^{n-1}$. The physical model and its consequences remain intact.

\medskip
\noindent{\bf Acknowledgment.} This paper was written during the 2026 program
``Contact geometry, general relativity and thermodynamics'' at the Simons Center for Geometry and Physics at Stony Brook. I thank the Simons Center for the excellent research atmosphere.
I am especially grateful to Shin-itiro Goto for numerous helpful comments and suggestions.
I also thank Alessandro Bravetti, Olaf M\"{u}ller, and Miguel S\'{a}nchez Caja for useful discussions and valuable feedback on the manuscript, as well as Stefan Nemirovski for providing references on the hyperbolic hodograph map.

\end{document}